\documentclass[11pt]{article}

%%%%%%%%%%%%%%%%%%%%
%% MISC. PACKAGES %%
%%%%%%%%%%%%%%%%%%%%

\usepackage{amsmath,amsthm,amssymb,epsf,eucal}
\usepackage{bm} %% Bold Math
\usepackage{graphicx,psfrag}
\usepackage{epstopdf}
\usepackage{verbatim}
\usepackage{multirow}
\usepackage{color}
\usepackage{setspace}
\usepackage{subfigure}
\bibliographystyle{plain}
\addtolength{\textwidth}{2in} %this number is 2 times the next number
\addtolength{\hoffset}{-1in}
%%%%%%%%%%%%%%%%%%%%%%%
%% MISC. DEFINITIONS %%
%%%%%%%%%%%%%%%%%%%%%%%

\newtheorem{claim}{Claim}

\newtheorem{lemma}{Lemma}

\newcommand*{\bs}{\boldsymbol}

% Title, authors and addresses

% use the thanksref command within \title, \author or \address for footnotes;
% use the corauthref command within \author for corresponding author footnotes;
% use the ead command for the email address,
% and the form \ead[url] for the home page:
% \title{Title\thanksref{label1}}
% \thanks[label1]{}
% \author{Name\corauthref{cor1}\thanksref{label2}}
% \ead{email address}
% \ead[url]{home page}
% \thanks[label2]{}
% \corauth[cor1]{}
% \address{Address\thanksref{label3}}
% \thanks[label3]{}

\title{Local likelihood estimation of local parameters for nonstationary random fields.}
\author{Ethan B. Anderes and Michael L. Stein}

\begin{document}
\maketitle
\frenchspacing

\begin{abstract}
We develop a weighted local likelihood estimate  for the parameters that govern the local spatial dependency of a locally stationary random field. The advantage of this local likelihood estimate is that it smoothly downweights the influence of far away observations, works for irregular sampling locations,  and when designed appropriately, can trade bias and variance for reducing estimation error. This paper starts with an exposition of our technique on the problem of estimating an unknown positive function when multiplied by a stationary random field. This example gives concrete evidence of the benefits of our local likelihood as compared to na\"ive local likelihoods where the stationary model is assumed  throughout a neighborhood.  We then discuss the difficult problem of estimating a bandwidth parameter that controls the amount of influence from distant observations.  Finally we present  a simulation experiment for estimating the local smoothness of a local Mat\'ern random field when observing the field at random sampling locations in $[0,1]^2$. 
\end{abstract}

\section{Introduction}
Stationary random fields play a  fundamental role in both theoretical and applied spatial statistics. 
Unfortunately, stationarity is often violated when working with real data. The issue is more pressing with the recent data deluge and high resolution sensing where non-stationarity can be clearly visible. This presents a challenge for the spatial statistician who is  interested in estimating and modeling dependency structure in random fields. 

Even though stationarity is often criticized as being too simplistic, we believe  it is still important for the understanding and development of nonstationary models. The reason is that any type of statistical estimation requires some sort of replication to \lq\lq average over".  In spatial statistics the data often comprise of one realization of a random field. In this case, the assumption of stationarity provides a type of replication that makes statistical estimation possible. For nonstationary random fields, however, the lack of any assumption leads to a breakdown in statistical estimation due to the absence of replication. This problem can be mitigated by adding assumptions on the nonstationary random field like local stationarity, for example. The idea is that on small enough spatial scales, one hopes that the local dependency of the random field near a point is well approximated by some stationary random field. In this paper, we do not attempt a precise definition of local stationarity (see \cite{Dahlh:00} for a definition in the time series literature). Instead, we take it for granted that  such random fields exist (however one defines it) and enter into a discussion of how one might estimate the parameters of a local stationary approximation when observing a single realization of the nonstationary random field at dense (possibly uneven) observation locations.  To accomplish these goals we develop a local likelihood approach for  estimating local parameters. 

Before we continue we mention why the standard local likelihood techniques fail for estimating the local dependency of a random field. In current techniques (see \cite{fan:98}, for example), the independence structure of the data typically allow one to decompose the log-likelihood as a sum. Each summand depends on one data point, which can be down weighted as a function of some spatial covariate. In the random field case, however, there is no independence and therefore no such decomposition of the log-likelihood. A different technique that does work for random fields is to simply divide the observation locations into neighborhoods and fit a stationary random field model on each neighborhood. Typically two problems arise with this approach. First, the range of validity of a stationary approximation can be too small to contain enough data to estimate it. Second, it can produce non-smooth  local parameter estimates, which can be undesirable in many cases. 
%Finally there has been work on local Fourier approximations to the likelihood (\textcolor{blue}{citation?}). Unfortunately, this method requires a grid format for the observations which is often violated in spatial statistics problems.
 
  In this paper we present an exposition through computation, simulation and some theory of our version of local likelihood estimation. A  large portion of the paper is devoted to the discussion of different ways of constructing and estimating the weights used in our local likelihood that downweights the influence of distant observations of the random field.  We start in Section \ref{sec2} with our definition of a weighted local likelihood and then immediately apply it to the problem of variance modulation  in Section \ref{VM}.
  This example is convenient since  the local likelihood estimate has a closed form and one can derive the Bayes risk under a polynomial prior for $\sigma(x)$. 
In Section \ref{Band} we study the problem of estimating the bandwidth parameter $\lambda$ from data. Finally, in Section \ref{2dExample} we apply these techniques to the estimation of the local fractional index of a local Mat\'ern random field when observing one realization of the field at uneven observation locations in $\Bbb R^2$.  

\section{Weighted Local Likelihood}
\label{sec2}

Before we present our notion of local likelihood it will be advantageous to set some notation. We write a random field as $\{ Z(t)\colon t\in\Bbb R^d \}$ or just $Z$ when the domain of definition is clear from context.  We distinguish two types of random field models: {\it global} and {\it local}. The only real distinction is nonstationary versus stationary but the nomenclature is useful since we regard global models as the true nonstationary sampling distribution and local models as the local stationary approximations. We consider global models that are nonstationary random fields indexed by some nonparametric function $\theta(t)$ that takes spatial arguments $t\in\Bbb R^d$ and returns values in some $m$-dimensional  parameter space $\Theta\subset \Bbb R^m$.
We call the function $\theta(\cdot)$ the  {\it local parameter function} and denote resulting global model $G_{\theta(\cdot)}$. For each fixed  $t_0\in\Bbb R^d$, the parameter vector $\theta_0=\theta(t_0)$ determines a {\it local} random field model, denoted by $L_{\theta_0}$, which is generally stationary and models the stochastic behavior of $Z$ near some point $t_0$. 
%Both $L_{\theta_0}$ and $G_{\theta(\cdot)}$ are legidameate random field models but we will be fitting local  models $L_{\theta_0}$ to recover the global true nonstationary model $G_{\sigma(\cdot)}$. 
Informally this means that  $\mathcal L_{L_{\theta_0}}\{Z(t_0+h)\colon |h|<\epsilon  \}\approx \mathcal L_{G_{\theta(\cdot)}}\{Z(t_0+h)\colon |h|<\epsilon  \}$ when $\epsilon$ is small ($\mathcal L_{L_{\theta_0}}$ denotes the law of the finite dimensional distributions under model $L_{\theta_0}$). For the remainder of the paper a local parameter function will be denoted by $\theta(\cdot)$, $\theta(t)$ or just $\theta$ and a particular value in the parameter space $\Theta\subset \Bbb R^m$ will be written  $\theta_0$.

For a concrete example, let $Z$ be a random field on $\Bbb R^d$ such that there exists a  fixed but unknown function  $\sigma(t)\colon \Bbb R^d\rightarrow \Bbb R^+$ and a known stationary random field $W$ such that  $Z(t)=\sigma(t) W(t)$. In this case,  $\sigma(t)$ is the local parameter function and $G_{\sigma(\cdot)}$ denotes the true distribution of $Z$. If   $\sigma(t_0+h)\approx\sigma_0$ when $|h|$  is small then the local model for $Z$ at $t_0$, denoted by $L_{\sigma_0}$, is just the law of  $\sigma_0 W(t)$. Presumably if one has enough data in a neighborhood of $t_0$ one could successfully estimate $\sigma(t_0)$ by fitting the model $L_{\sigma_0}$ to the data.  Notice there is an inherent bias-variance tradeoff when fitting $L_{\sigma_0}$ locally to data: increasing the size of the local neighborhood reduces the variability of the estimate but increases the bias due to the inaccuracy of the stationary approximation. It is with local likelihoods that we seek to balance these two competing terms by smoothly down weighting the dependence of far away observations.

To define the local likelihood estimation of  $\theta(\cdot)$ suppose we have $n$ observations $(t_1,z_1),\ldots,(t_n,z_n)$ of a single realization of a random field $Z$ (the spatial locations are $t_j\in\Bbb R^d$ and the responses are $z_j=Z(t_j)$).
For any given location $t\in \Bbb R^d$, let  $\mathcal  N_{t,k}$ denote the set of $k$ observations nearest to $t$.  Given  weights $w_1,\ldots, w_n$  (possibly depending on $t$, $t_k$ and a bandwidth parameter $\lambda$) we define the following weighted local likelihood:
\begin{equation}
\label{locLike}
 \mathcal{ W}_{\lambda}(\theta_0, t|\text{data})\triangleq\sum_{k=1}^n w_k \bigl[ \ell(\mathcal  N_{t,k}| L_{\theta_0})-\ell(\mathcal  N_{t,k-1}|L_{\theta_0}) \bigr]
 \end{equation}
 where $\ell(\mathcal  N_{t,k}|L_{\theta_0})$ is the log likelihood of the data in $\mathcal N_{t,k}$ under the model $L_{\theta_0}$ (note: we define $\ell(\mathcal  N_{t,0}|L_{\theta_0})$ to be $0$).
%There are four quantities now of interest:  $\ell_{w,x}(Y|P_{\theta})$ which is the local log-likelihood under the full nonstationary;  $\ell_{w,x}(Y|P_{\theta(x)})$ which is the local likelihood under the local model $P_{\theta(x)}$; $\ell(Y|P_{\theta})$ which is the full likelihood using the full nonstationary model $P_{\theta}$; and  $\ell(Y|P_{\theta(x)})$ which is the full likelihood using the local model $P_{\theta(x)}$. 
Now our local likelihood estimate of $\theta(t)$ is defined as
\[ \hat{\theta}_\lambda(t)\triangleq \text{arg}\max_{\theta_0 \in \Bbb R^m}  \mathcal W_\lambda(\theta_0, t|\text{data}). \]
We first remark that if the weights $w_k=1$ for all $k$, then the telescoping sum in (\ref{locLike}) collapses and one recovers the full likelihood for the local stationary model. Indeed $\mathcal W_\lambda$ orders and subsequently downweights  the incremental changes in the stationary likelihood when adding the observations one by one in order of their distance to $t$.
An important feature of the estimate $\hat \theta_\lambda$ is that, at least for Gaussian random fields, the computational cost of  $\mathcal W_\lambda(\theta_0, t|\text{data})$ is comparable to that of  $\ell(\theta|\text{data})$  by either {\it up dating} a Cholesky decomposition or  {\it down dating} an inverse covariance matrix. Moreover, the estimate $\hat \theta_\lambda(t)$ will typically be a \lq\lq smooth" function of $t$ (when the weights are smooth) even when there is no natural  additive or multiplicative structure on the parameter space $\Theta\subset\Bbb R^m$. Finally we remark that $\hat\theta_\lambda$ depends on  the bandwidth parameter $\lambda$ through the weights $w_k$ which  typically have the form $w_k\triangleq  K((t-t_k)/\lambda)$ for some smoothing kernel $K$.

%%%%%%%%%%%%%%%%%%%%%%%%%%%%%%%%%
\section{Variance modulation} \label{VM}
We start with a particularly simple example, already mentioned in the previous section:  estimating the local variance of a Gaussian random field. Consider the estimation of the function $\sigma(t)\colon \Bbb R \rightarrow \Bbb R^+$ when observing $\sigma(t)W(t)$ where $W$ a mean zero Gaussian process on $\Bbb R$ with known Mat\'ern parameters.
This example is useful since the local maximum  likelihood estimate has a closed form solution. We take advantage of this solution by deriving the Bayes risk under polynomial priors. The Bayes risk is then used to give concrete evidence of a bias and variance trade for reducing estimation error. We also use the Bayes risk to explore the relationship between 
higher order kernels, bias of the resulting estimates and optimal bandwidth.

To derive the closed form solution of the local likelihood estimate we first need some notation. Let $\Sigma_{k}$ denote the covariance matrix of the observations in the $k$ neighborhood of $t$, i.e. $\mathcal N_{t,k}$, and let $\bs z_{k}\triangleq (z_1,\ldots,z_k)^T$ denote the responses of the data in $\mathcal N_{t,k}$. 
%The denisty of $\bs z_{k}$ is 
%\[ \frac{1}{(2\pi)^{k/2}|\sigma^2\Sigma_{k}|^{1/2}} \exp\left[-\frac{ \bs z_{k}^T \Sigma_{k}^{-1} \bs z_{k}}{2 \sigma^2}\right]. \]
%Therefore
Notice that 
\[ \ell( \mathcal  N_{t,k}|L_{\sigma_0})=-\frac{ \bs z_{k}^T \Sigma_{k}^{-1} \bs z_{k}}{2\sigma_0^2}-\frac{\log|\sigma_0^2\Sigma_{k}|}{2}-\frac{k}{2}\log (2\pi)  \]
%and therefore
%\[ \bigl[ \ell(\sigma|\mathcal  N_{x,k})-\ell(\sigma|\mathcal  N_{x,k-1}) \bigr]=\frac{\bs z_{k-1}^T \Sigma_{k-1}^{-1} \bs z_{k-1} -\bs z_{k}^T \Sigma_{k}^{-1} \bs z_{k}  }{2\sigma^2} -\log \sigma +\frac{1}{2} \log\frac{|\Sigma_{k-1}|}{2\pi|\Sigma_{k}|}.\]
and therefore
 \[ \mathcal W_\lambda(\sigma_0,t|\text{data})=-\log \sigma_0\sum_{k=1}^n w_k+\frac{1}{2\sigma_0^2}\sum_{k=1}^n w_k \Bigl[\bs z_{k-1}^T \Sigma_{k-1}^{-1} \bs z_{k-1} -\bs z_{k}^T \Sigma_{k}^{-1} \bs z_{k} \Bigr] +c. \]
  The maximum occurs at 
 \begin{equation}
 \label{softEst}
  \hat \sigma_\lambda^2\triangleq \frac{\sum_{k=1}^n w_k \Bigl[\bs z_{k}^T \Sigma_{k}^{-1} \bs z_{k}-\bs z_{k-1}^T \Sigma_{k-1}^{-1} \bs z_{k-1}  \Bigr] }{\sum_{k=1}^n w_k}
  \end{equation}
 which is the local maximum weighted likelihood estimate of $\sigma^2$ at $t$. 
 
 \begin{figure}[t]
\centering
{\includegraphics[height=1.8in]{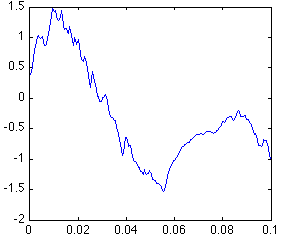}}
{\includegraphics[height=1.8in]{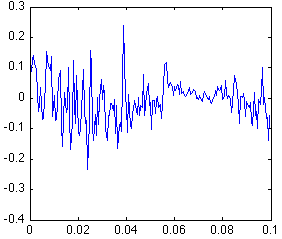}}
{\includegraphics[height=1.8in]{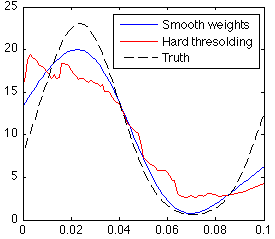}}
\caption{\label{asdf} {\it Left:} The simulated $\sigma(t) W(t)$ were $W$ is a mean zero Gaussian random field with Mat\'ern autocovariance function with parameters $(\sigma^2,\nu,\rho)=(1,0.8,0.2)$ and $\sigma(t)=2\sin(t/0.015)+2.8$. {\it Middle:} The increments of $\sigma(t)W(t)$ on the $200$ evenly spaced observation locations in the interval $[0,0.1]$.
{\it Right:} Plot of the true local parameter function $\sigma^2(t)$ (dashed) along with two estimates of $\sigma^2(t)$, one using smooth weights (blue) and the other using hard thresholding weights (i.e. na\"ive local likelihood estimation).}
\end{figure}

Figure \ref{asdf} shows an example of this local likelihood estimate  (blue line in the right-hand diagram) as compared with na\"ive local likelihood estimation (red line in the right-hand diagram) where the stationary model is presumed to hold through the local neighborhood.
 The left most diagram in Figure \ref{asdf} shows the simulated field $\sigma(t) W(t)$ at $200$ evenly spaced observation locations in $[0,1]$, where $W$ is a mean zero Gaussian random field with known Mat\'ern autocovariance function with parameters $(\sigma^2,\nu,\rho)=(1,0.8,0.2)$  (using the parameterization given on page 50 of \cite{stein:book}) and unknown $\sigma(t)=2\sin(t/0.015)+2.8$. Notice that even though the local variance changes significantly throughout the observation region, it is difficult to see since the range parameter $\rho=0.2$ is relatively large compared with the observation region.  However, if one looks at increments of $\sigma(t)W(t)$, shown in the middle diagram, the change  in local variance becomes clearly visible. The true  local parameter function $\sigma^2(t)$ is shown in the right hand plot of Figure \ref{asdf} (dashed).
The smooth weights take the form $w_k=K((t-t_k)/\lambda)$ where $K$ is defined as $K(t)\triangleq e^{-t^2/2}(15-10 t^2+ t^4)/8$ which is kernel $K_6$ presented in \cite{wand:90} (and also used in Section \ref{Bay} below).  
The bandwidth used for both estimators is obtained by minimizing the Kullback-Leibler divergence of the estimated global model $G_{\hat\sigma^2_\lambda(\cdot)}$ to the truth at the observation locations.
 Notice that using smooth weights, the local likelihood estimate can not only reduce estimation error but can also yield smooth estimates of the local parameter function $\sigma^2$.

 {\it Remark:}
Although the computation of each  $\Sigma_k^{-1}$ in  (\ref{softEst}) can be very time consuming, there is a downdating algorithm that makes all the inverses obtainable in nearly the same computational time as $\Sigma_n^{-1}$ (where $n$ is the number of observations).
 Once the inverse $\Sigma_{k}^{-1}$ is computed, the downdate $\Sigma_{k-1}^{-1}$ is easily computable by the following formula:
 \[ \Sigma_{k}^{-1}=\begin{pmatrix} \Sigma_{k-1}& b\\ b^T & c \end{pmatrix}^{-1} = \begin{pmatrix}
\Sigma_{k-1}^{-1} + \frac{\Sigma_{k-1}^{-1} bb^T\Sigma_{k-1}^{-1} }{k}& -\frac{\Sigma_{k-1}^{-1}b}{ k}\\ 
 -\frac{b^T\Sigma_{k-1}^{-1}}{ k} &\frac{1}{k}
\end{pmatrix}
 \]
 where $k=c-b^T\Sigma_{k-1}^{-1} b$. Therefore
 \[ \Sigma_{k-1}^{-1}= \Sigma_{k}^{-1}{(1\colon k-1,1\colon k-1)}-\frac{\Sigma_{k}^{-1}(1:k-1,k)\Sigma_{k}^{-1}(k,1:k-1)}{\Sigma_{k}^{-1}(k,k)}. \]
where $\Sigma_{k}^{-1}{(a\colon\! b,c\colon\! d)}$  denotes the sub-matrix of $\Sigma_{k}^{-1}$ of rows $a$ through $b$ and columns $c$ through $d$. Note that the estimates $\hat\sigma^2 (t)$ and  $\hat\sigma^2(s)$ at two different locations $t\neq s$ are very similar: the only difference is the order one downdates the  row and column of $\Sigma_n^{-1}$. This means one can construct $\hat \sigma^2$ at may different spatial locations by computing $\Sigma_n^{-1}$ once and for all, then downdate the inverse in different row and column orders.

%%%%%%%%%%%%%%%%%%%%%%%%%%%%
 \subsection{Bayes risk under polynomial priors for $\sigma(\cdot)$}
 \label{Bay}
 
Now we study the behavior of $\hat \sigma_\lambda^2(t_0)$ at some fixed spatial location $t_0$. We think of the parameters that govern the deviation of $\sigma(t)$ from $\sigma(t_0)$ as nuisance parameters and model this situation by supposing $\sigma$ has the following finite taylor expansion
 \begin{equation}
 \label{taylor}
  \sigma(t)= c_0+\sum_{p=1}^N c_p (t-t_0)^k.
  \end{equation}
 The zero order coefficient, $c_0$, is the parameter of interest and the higher order coefficients, $c_p$ for $p>0$, are nuisance parameters. Now 
using a prior $\pi(\bs c)$ for the nuisance coefficients  $\bs c\triangleq (c_1,\ldots,c_N)$ we derive the Bayes risk  of $\hat\sigma_\lambda^2(t_0)$ under $L_2$ loss. For simplicity we suppose  $\pi(\bs c) =\pi_1(c_1)\cdots \pi_N(c_N)$, $E_{\pi_j} c_j=0$ and $E_{\pi_j} c^3_j=0$ for all $j\geq 1$. 
One of the goals of this section to attempt to understand the relationship between these nuisance parameters and using higher order kernels that are orthogonal to polynomials of order at most $N$. The other goal is to study the relationship between bias, risk, bandwidth and parameters of the stationary random field.

 Let $(t_1,\ldots,t_k)$ be the observation locations for $\bs z_k= (z_1,\ldots,z_n)^T$ so that $z_j=\sigma(t_j) W(t_j)$. To make the notation clearer we suppose the observations are ordered by their distance to $t_0$, so that $\mathcal N_{t_0,3}=\{z_1,z_2,z_3 \}$ for example.  Now $\bs z_k$ has covariance function $\Delta_\sigma \Sigma_{k} \Delta_{\sigma}$ where
 \begin{align}
  \Delta_\sigma&\triangleq\text{diag}(\sigma(t_1),\ldots,\sigma(t_n))
  %= \sigma(x_0)I_n + \sum_{p=1}^Nc_k \text{diag}((x_1-x_0)^p,\ldots,(x_n-x_0)^p)
  =\sum_{p=0}^Nc_k \Delta_{n}^p
  \end{align}
   and $\Delta_{n}^p\triangleq\text{diag}[(t_1-t_0)^p,\ldots,(t_n-t_0)^p]$.     
 To derive expressions for the bias and mean square error for $\hat\sigma^2_\lambda(t_0)$, notice that
 \begin{align}
 E\bigl[\bs z_k^T \Sigma_k^{-1} \bs z_k\bigr|\bs c\bigr] &=  E\text{tr}\bigl[\bs z_k \bs z_k^T \Sigma_k^{-1}\bigr|\bs c\bigr] =\text{tr}[\Delta_{\sigma} \Sigma_k \Delta_\sigma \Sigma_k^{-1}] 
 = \sum_{p_1,p_2=0}^Nc_{p_1}c_{p_2} \text{tr} \left[ \Sigma_k^{-1} \Delta^{p_1}_k \Sigma_k \Delta_k^{p_2}   \right]
 \end{align}
 and 
 \begin{align}
   \text{cov}\bigl[\bs z_k^T \Sigma_k^{-1} \bs z_k,\bs z_j^T \Sigma_{j}^{-1}\bs z_j\bigr|\bs c\bigr]&=  \text{cov}\bigl[\bs z_k^T \Sigma_k^{-1} \bs z_k,\bs z_k^T \Sigma_{j\rightarrow k}^{-1}\bs z_k\bigr|\bs c\bigr] 
   %&{\buildrel\rm\textcolor{blue}{ check}\over=}\; 2\,
   =\text{tr}[\Sigma_k^{-1} \Delta_\sigma \Sigma_k \Delta_\sigma \Sigma_{j\rightarrow k}^{-1} \Delta_\sigma \Sigma_k \Delta_\sigma  ] \\
   &= 2 \sum_{p_1,\ldots,p_4=0}^N c_{p_1}\cdots c_{p_4} \text{tr}[\Sigma_k^{-1} \Delta_k^{p_1} \Sigma_k \Delta_k^{p_2} \Sigma_{j\rightarrow k}^{-1} \Delta_k^{p_3} \Sigma_k \Delta_k^{p_4}  ]
    \end{align}
where $k\geq j$ and $\tilde\Sigma^{-1}_{j\rightarrow k}\triangleq\left( {\Sigma^{-1}_{j} \atop 0} { 0 \atop 0}\right)$ is the matrix $\Sigma_j^{-1}$ padded with zeros so that it has the same size as $\Sigma_k^{-1}$. 
At this point it becomes convenient to re-write $\hat\sigma_\lambda^2(t_0) $ as $\sum_{k=1}^n \tilde w_k \; \bs z_k^T \Sigma_k^{-1} \bs z_k$ where 
\[\tilde w_k\triangleq \begin{cases} (w_k-w_{k+1})/\sum_{j=1}^n w_j, & \text{if $k<n$;} \\  w_n/\sum_{j=1}^n w_j, & \text{if $k=n$.} \end{cases}\] 
 Now the expected value of $\hat \sigma_\lambda^2(t_0)$, conditional on $\bs c= (c_1,\ldots,c_N)$, is 
 \begin{align}
 E\bigl[\hat \sigma_\lambda^2(t_0)\bigr|\bs c\bigr]&=\sum_{k=1}^n \tilde w_k E\bigl[\bs z_k^T \Sigma_k^{-1}\bs z_k\bigr|\bs c\bigr]  =\sum_{k=1}^n\sum_{p_1,p_2=0}^N \tilde w_k c_{p_1}c_{p_2}  \text{tr}\bigl[\Sigma_k^{-1} \Delta_{k}^{p_1} \Sigma_k \Delta_{k}^{p_2} \bigr] =\sum_{p_1,p_2=0}^N c_{p_1}c_{p_2} B^{p_1,p_2}
  \end{align}
  where $ B^{p_1,p_2}\triangleq \sum_{k=1}^n\tilde w_k  \text{tr}\Bigl[ \Sigma_k^{-1} \Delta_{k}^{p_1} \Sigma_k \Delta_{k}^{p_2}  \Bigr]$.
The variance is
 \begin{align}
 \text{var}\bigl[\hat \sigma_\lambda^2(t_0)|\bs c\bigr] &=  \text{var}\left[\sum_{k=1}^n \tilde w_k\,  \bs y^T_k \Sigma_k^{-1} \bs z_k\Bigl|\bs c\right] =  \sum_{j,k= 1}^n \tilde w_k \tilde w_j \,\text{cov}[\bs y^T_k \Sigma_k^{-1} \bs z_k, \bs y^T_j \Sigma_j^{-1} \bs z_j] \\
  &= 2 \sum_{j,k= 1}^n  \sum_{p_1,\ldots,p_4=0}^{N} \tilde w_k \tilde w_j c_{p_1}\cdots c_{p_4} \text{tr}[\Sigma_k^{-1} \Delta_k^{p_1} \Sigma_k \Delta_k^{p_2} \Sigma_{j\wedge k\rightarrow j\vee k}^{-1} \Delta_k^{p_3} \Sigma_k \Delta_k^{p_4}  ] \\
  &=  \sum_{p_1,\ldots,p_4=0}^{N} c_{p_1}\cdots c_{p_4} B^{p_1,\ldots,p_4}  \end{align}
  where
$B^{p_1,\ldots,p_4} \triangleq   2 \sum_{j,k= 1}^n \tilde w_k \tilde w_j \text{tr}\Bigl[\Sigma_k^{-1} \Delta_k^{p_1} \Sigma_k \Delta_k^{p_2} \Sigma_{j\wedge k\rightarrow j\vee k}^{-1} \Delta_k^{p_3} \Sigma_k \Delta_k^{p_4} \Bigr ] $.
%So that
%\begin{align}
% E\bigl[\widehat{\sigma^2(x_0)}|\bs c\bigr]&= \sum_{p_1,p_2=0}^N c_{p_1}c_{p_2} B^{p_1,p_2} \\
% \text{var}\bigl[\widehat{\sigma^2(x_0)}|\bs c\bigr]  &= \sum_{p_1,\ldots,p_4=0}^{N} c_{p_1}\cdots c_{p_4} B^{p_1,\ldots,p_4}
%\end{align}
% where
% \begin{align*}
% B^{p_1,p_2}&\triangleq \sum_{k=1}^n\tilde w_k  \text{tr}\Bigl[ \Sigma_k^{-1} \Delta_{k}^{p_1} \Sigma_k \Delta_{k}^{p_2}  \Bigr]\\
% B^{p_1,\ldots,p_4} &\triangleq   2 \sum_{j,k= 1}^n \tilde w_k \tilde w_j \text{tr}\Bigl[\Sigma_k^{-1} \Delta_k^{p_1} \Sigma_k \Delta_k^{p_2} \Sigma_{j\rightarrow k}^{-1} \Delta_k^{p_3} \Sigma_k \Delta_k^{p_4} \Bigr ] 
% \end{align*}
 Now the expected squared bias under the prior distribution for the nuisance parameters can be decomposed into two quadratic terms
 \begin{align}
 E_{\pi}[ \text{bias}^2] &= E_{\pi}[  E\bigl[\hat \sigma^2(t_0)|\bs c\bigr] -\sigma^2(t_0)]^2 =  E_{\pi}\left[ \sum_{p_1,p_2=0}^N c_{p_1}c_{p_2} B^{p_1,p_2} -c_0^2\right]^2 \\
&=  E_{\pi}\left[ \sum_{p_1,p_2=1}^N c_{p_1}c_{p_2} B^{p_1,p_2} + 2 c_0\sum_{p=1}^N c_{p} B^{0,p} \right]^2 \\
%&{\buildrel\rm\textcolor{blue}{ check}\over=}\; 
&=E_{\pi}\left[ \sum_{p_1,p_2=1}^N c_{p_1}c_{p_2} B^{p_1,p_2}\right]^2 + 4\, c_0^2\, E_{\pi}\left[\sum_{p=1}^N c_{p} B^{0,p}\right]^2 \label{lasteeq}
%&=   \sum_{p_1,p_2,p_3,p_4=0}^N E_{\pi}\bigl[ c_{p_1}c_{p_2}c_{p_3} c_{p_4}\bigr] B^{p_1,p_2} B^{p_3,p_4} - 2 \sum_{p_1,p_2=0}^N E_{\pi} \bigl[c_0^2 c_{p_1}c_{p_2} \bigr] B^{p_1,p_2} + E_\pi\bigl[ c_0^4 \bigr]\\
%&=   \sum_{p_1,p_2,p_3,p_4=0}^N E_{\pi}\bigl[ c_{p_1}c_{p_2}c_{p_3} c_{p_4}\bigr] B^{p_1,p_2} B^{p_3,p_4} - 2 \sum_{p_1=0}^N E_{\pi} \bigl[c_0^2 c^2_{p_1} \bigr] B^{p_1,p_1} + E_\pi\bigl[ c_0^4 \bigr]
 \end{align}
since $B^{0,0}=\sum_{k=1}^n k \tilde w_k=1$ and $B^{0,p}=B^{p,0}$. Note that the cross term in (\ref{lasteeq}) is zero by the assumption $E_{\pi_j} c^3_j=0$ for all $j\geq 1$.
 Therefore the Bayes risk is
 \begin{align} 
 \label{BayesR}
 E_{\pi} E \Bigl[\hat \sigma^2(t_0)- {\sigma^2(t_0)}  \bigr]^2\Bigl|\bs c\Bigr] &=   \sum_{p_1,p_2,p_3,p_4=0}^N E_{\pi}\bigl[ c_{p_1}c_{p_2}c_{p_3} c_{p_4}\bigr] B^{p_1,p_2,p_3,p_4} \\
 &\qquad+  \sum_{p_1,p_2,p_3,p_4=1}^N E_{\pi}\bigl[ c_{p_1}c_{p_2}c_{p_3} c_{p_4}\bigr] B^{p_1,p_2}B^{p_3,p_4} \\ \label{lastTerm}
 &\qquad\qquad+ 4\, c_0^2\, E_{\pi}\left[\sum_{p=1}^N c_{p} B^{0,p}\right]^2.
  \end{align}
% For computational purposes we can even simplfy further to get
% \begin{align}
%  \sum_{p_1,p_2,p_3,p_4=0}^N E_{\pi}\bigl[ c_{p_1}c_{p_2}c_{p_3} c_{p_4}\bigr] B^{p_1,p_2,p_3,p_4} &= \sum_{{p,q=0} \atop {q\neq p}}^N E_{\pi} [c_p^2] E_{\pi} [c_q^2]\Bigl [B^{p,p,q,q} + B^{p,q,p,q}+ B^{p,q,q,p} \Bigr] \\
%  &\qquad+ \sum_{p=0}^N E_\pi [c_p^4] B^{p,p,p,p}
% \end{align}
% and
%  \begin{align}
%  \sum_{p_1,p_2,p_3,p_4=1}^N E_{\pi}\bigl[ c_{p_1}c_{p_2}c_{p_3} c_{p_4}\bigr] B^{p_1,p_2}B^{p_3,p_4} &= \sum_{{p,q=1} \atop {q\neq p}}^N E_{\pi} [c_p^2] E_{\pi} [c_q^2]\Bigl [B^{p,p}B^{q,q} + (B^{p,q})^2+ {B^{p,q}}B^{q,p} \Bigr] \\
%  &\qquad+ \sum_{p=1}^N E_\pi [c_p^4] (B^{p,p})^2
% \end{align}
%%%%%%%%%%%%%%%%%%%%%%%%%%%%%
 %\subsection{Higher order kernels for reducing the bias}
 %\label{high}
    Notice that the above expression decomposes the Bayes risk into three quadratic terms.
The last term (\ref{lastTerm}), which comes from the bias, can be rewritten as
   \begin{align}
  B^{0,p}&= \sum_{k=1}^n\tilde w_k  \text{tr}\bigl[ \Delta_{k}^{p}  \bigr] =\sum_{k=1}^n\tilde w_k  \text{tr}\bigl[ \text{diag}((t_1-t_0)^p,\ldots,(t_k-t_0)^p) \bigr] \\
  &=\sum_{k=1}^n\tilde w_k [(t_1-t_0)^p+\ldots+(t_k-t_0)^p]\\
  &=\frac{1}{\sum_{j=1}^n w_j}\sum_{k=1}^n w_k  (t_k-t_0)^p.
  \end{align}
Therefore, if  $w_k$ is constructed by $w_k\triangleq  K((t_0-t_k)/\lambda)$, where $\int_{\Bbb R} K(t) dt= 1$ and $\int_{\Bbb R} K(t) t^p dt= 0$ for some $p>1$, then under mild conditions on the sampling locations $t_k$ and $K$
\begin{align*}
 B^{0,p}&=\frac{\sum_{k=1}^n K\left[\frac{t_0-t_k}{\lambda}\right]  (t_k-t_0)^p}{ \sum_{j=1}^n K\left[\frac{t_0-t_j}{\lambda}\right] }\longrightarrow 0
 \end{align*}
as $n\rightarrow \infty$, $\lambda \rightarrow 0$ and the $t_k$'s get more dense in a bounded region near $t_0$.   In particular, if one uses higher order kernels, the last term (\ref{lastTerm}) can be made arbitrarily small under infill asymptotics. Note: it is only necessary to consider $\lambda\rightarrow 0$ when our observation locations $t_k$ stay bounded in a compact domain and $K$ has infinite support (which is the case we will consider in this section).

\begin{figure}[h]
\centering
{\includegraphics[height=2.4in]{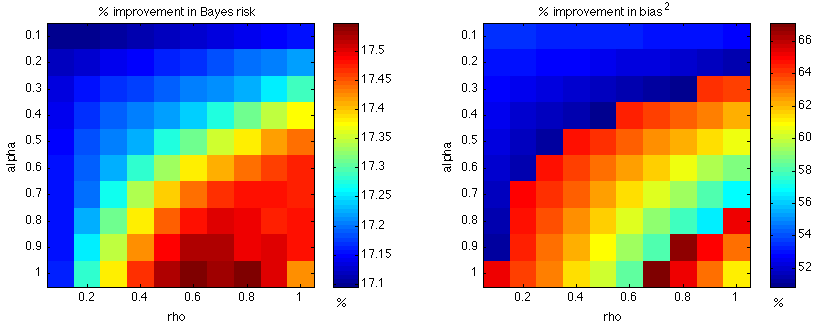}}
\caption{ \label{heat} The percent improvement in Bayes risk (left) and squared bias (right) over hard threshold when using an oracle bandwidth selector and weights generated from kernel $K_6$. The rows of the heat maps correspond to different values of $\nu$ and the columns to $\rho$ for the Mat\'ern autocovariance function. }
\end{figure}

\begin{figure}[h]
\centering
{\includegraphics[height=2.4in]{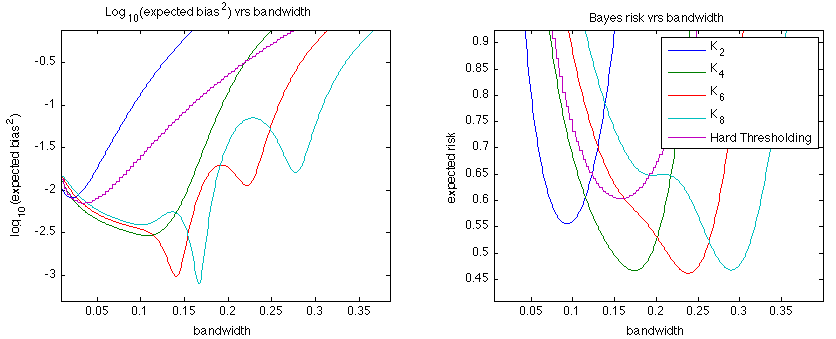}}
\caption{\label{Rvb}
Bayes risk (right) and expected bias squared (left) of $\hat\sigma^2_\lambda(1/2)$ plotted against bandwidth for $5$ different kernels: $K_{2}$, $K_4$, $K_6$, $K_8$ and hard thresholding. 
The estimate $\hat\sigma_\lambda^2(1/2)$  is based on 150 observation locations in $[0,1]$ and the Mat\'ern autocovariance function with $\nu=\rho=0.8$.}
\end{figure}

\paragraph{Numerical Results} 
 For exposition we consider a class of Gaussian-based higher order kernels $K_{2r}$ defined in \cite{wand:90} as
 \begin{equation}\label{highK} K_{2r}(t)\triangleq Q_{2r-2}(t) \phi(t) \end{equation}
where $\phi(t)=e^{-t^2/2}/\sqrt{2\pi}$ is the Gaussian kernel, $Q_{2r-2}=\{2^{r-1}(r-1)! \}^{-1} H_{2r-1}(t)/t$, and $H_j$ denotes the $j^\text{th}$ normalized Hermite polynomial defined by $H_j(t)=(-1)^j \phi^{(j)}(t)/\phi(t)$. These kernels have the required property that $\int_{\Bbb R} K(t) t^p dt= 0$ for all $0<p< 2r$. Of course these are not the only such kernels but we use them since the resulting estimates $\hat\sigma_\lambda^2(t)$ will be very smooth in $t$. 
 Finally we compare all our results to the the hard thresholding weights defined as
 $ w_k \triangleq  \bs 1_{B_\lambda(t)}(t_k)$,
 where $\bs 1_A$ is the indicator of the set $A\in \Bbb R$ and $B_\lambda(t)$ is the ball of radius $\lambda$ centered at $t$. The hard thresholding weights yield the na\"ive local likelihood estimate where the local stationary model is presumed to hold everywhere in the local neighborhood.

Now we use the above derivation of Bayes risk to numerically investigate the improvement of local likelihood estimation over hard thresholding. We computed the Bayes risk and bias  terms in equation (\ref{BayesR}) under the assumption that $t_0=1/2$, $\sigma(t_0)=2$, and $c_j\overset{iid}\sim \mathcal N(0,4)$ for $j=1,\ldots,4$ (so that $N=4$ in (\ref{taylor})).
The heat maps in Figure \ref{heat} show the percent improvement over hard threshold when using  kernel $K_6$ defined in (\ref{highK}) to generate the weights $w_k$.  The bandwidth for both $K_6$  and hard thresholding  were chosen using the Kullback-Leibler oracle criterion. The rows of the heat maps correspond to different values of $\nu$ and the columns to $\rho$ for the Mat\'ern autocovariance function. The observation locations for which the local likelihood estimate $\hat\sigma_\lambda$ is based on  are $100$ evenly spaced points in $[0,1)$. First notice that there is a 17\% improvement over hard thresholding uniformly over the possible values of $\nu$ and $\rho$. An interesting feature of the leftmost diagram is that local likelihood estimates do better as the random field gets smoother. A possible explanation is that the smoother the random field the larger the neighborhood required to attain a sufficient variance reduction. For these large neighborhoods the higher order kernels have significantly less bias. Finally, we mention that the right-hand diagram shows an improvement in expected bias squared that can reach almost 68\%.

 In Figure \ref{Rvb} we plot the Bayes risk and expected bias squared as a function of bandwidth for the estimate $\hat\sigma_\lambda^2(1/2)$ using the kernels   $K_{2}$, $K_4$, $K_6$, $K_8$  and hard thresholding. The estimate is based on the Mat\'ern random field with parameters $\nu=\rho=0.8$ with $150$ even observation locations in $[0,1]$. We use the same polynomial prior as in the last paragraph. Notice the significant improvement in risk and bias squared using kernels $K_4$, $K_6$ and $K_8$. Also notice that the improvement in risk and bias occurs at larger bandwidths for higher order kernels.

%
%\begin{figure}
%\centering
%{\includegraphics[height=2.5in]{expBias2.png}}
%{\includegraphics[height=2.5in]{expRisk2.png}}\\
%{\includegraphics[height=2.5in]{Example2.png}}
%\caption{Another example}
%\end{figure}

%

 %%%%%%%%%%%%%%%%%%%%%%%%%%%%%
\section{Bandwidth selection}
\label{Band}
The local likelihood estimates $\hat\theta_\lambda$ can vary dramatically depending on the choice of smoothing parameter $\lambda$. Since the theoretical derivation of Bayes risk in anything but the simplest setting is extremely difficult it is necessary to develop estimates of $\lambda$ from data. Classical methods such as cross-validation may fail dramatically when the data is comprised of a single realization of a nonstationary random field. The data is very highly correlated so that a \lq\lq leave out" prediction is problematic since the data \lq\lq left out" is highly correlated with the data \lq\lq left in". In this section we give a heuristic for constructing a reasonable estimate $\hat \lambda$. We then present our interpretation of this heuristic in terms of two different estimates of $\lambda$. At the end of this section we present some numerical simulations to illustrate their behavior. We make no claim of optimality or any theoretical justification other than heuristics. 
%However, we consider the estimation of $\lambda$ a very interesting problem and we hope that our estimates will spark some deeper study of this problem.

Our heuristic for the bandwidth  estimator says that one should choose the bandwidth to maximize the spatial variability in $\hat\theta_\lambda$ beyond what is expected from the realization of the random field itself. The idea is that  there are two sources of spatial variability in $\hat \theta_\lambda$. The first is the spatial variability from the true local parameter function $\theta$ which, of course, $\hat\theta_\lambda$ is trying to estimate. The second is the spatial variation coming from the particular realization of the random field itself. Notice that the smaller the bandwidth $\lambda$ the more the spatial variation in $\hat\theta_\lambda$ is due to the random field realization, and less to the variability of  $\theta$. 
To describe our mathematical interpretation of this heuristic we need some notation.  Let  $\mathcal P(\hat\theta_\lambda)$ be a measure of spatial variation in $\hat\theta_\lambda$. For example, when the local parameter $\theta$ is univariate so the  $\theta(t)$ maps $\Bbb R^d$ to $\Bbb R$ a natural choice might be $\mathcal P( \hat\theta_\lambda )\triangleq \int_{\Bbb R^d}|\nabla \hat\theta_\lambda |^2 $. 
Let $\overline \theta$ be the MLE of the local parameter function obtained by assuming the global model  is stationary (i.e.\! by assuming  $\theta(t)$ is a constant function of $t$). Now consider simulating an independent realization of the data under the stationary random field model using the estimated parameter $\overline\theta$. Let $E_{{\overline \theta}} \mathcal P( \hat\theta_\lambda )$ denote the expected value of $\mathcal P(\hat\theta_\lambda)$ where $\hat\theta_\lambda$ is applied to the new realization of the stationary random field. The idea is that $E_{{\overline \theta}} \mathcal P( \hat\theta_\lambda )$ quantifies  the spatial variation in $\hat \theta_{\lambda}$ that is exclusively due  to the random field itself (since the true local parameter function is constant). Now we estimate $\lambda$ as follows:
\begin{equation}
\label{H1}
\hat\lambda_\text{1} \triangleq  \text{arg}\max_\lambda \frac{\mathcal P( \hat\theta_\lambda )- E_{{\overline \theta}} \mathcal P( \hat\theta_\lambda )}{{\text{sd}}_{{\overline \theta}} \mathcal P(\hat\theta_\lambda )} .
\end{equation}
In equation (\ref{H1}) we divided by ${\text{sd}}_{{\overline \theta}} \mathcal P(\hat\theta_\lambda )$, which denotes the standard deviation of $\mathcal P(\hat\theta_\lambda )$ under the stationary model given by $\overline \theta$, to improve comparison across $\lambda$.

Our second bandwidth selector is similar to the first, the main difference being that the measure of spatial variation $\mathcal P(\hat\theta_\lambda)$ is replaced with $\sum_t ( \mathcal W_\lambda (\hat\theta_\lambda(t),t) - \mathcal W_\lambda (\overline\theta,t))$.  This is recognized as a type of local likelihood ratio test statistic (summed over the spatial variable $t$) that compares the estimate $\hat\theta_\lambda$ to the stationary fit $\overline \theta$. Then after adjusting by the expected behavior of this quantity under the stationary fit $\overline \theta$ we get the following estimate of bandwidth:
\begin{equation}
\label{secondBand}
 \hat\lambda_2=\text{arg}\max_{\lambda}\frac{\sum_t ( \mathcal W_\lambda (\hat\theta_\lambda(t),t) - \mathcal W_\lambda (\overline\theta,t))-E_{\overline \theta}\left[\sum_t ( \mathcal W_\lambda (\hat\theta_\lambda(t),t) - \mathcal W_\lambda (\overline\theta,t)) \right]}{\text{sd}_{\overline \theta}\left[\sum_t ( \mathcal W_\lambda (\hat\theta_\lambda(t),t) - \mathcal W_\lambda (\overline\theta,t)) \right]}
 \end{equation}
 In our implementation of both (\ref{H1}) and (\ref{secondBand}) we use simulations to estimate the expected value and standard deviation under the stationary fit $\overline \theta$.
 
\begin{figure}[h]
\centering
{\includegraphics[height=2.4in]{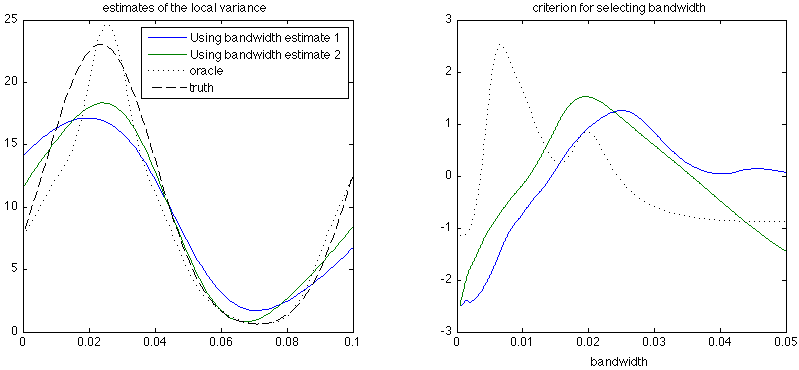}}
\caption{\label{SNRGood2}
{\it Left:} $\sigma^2$ (dashed), $\hat\sigma^2_{\hat\lambda_1}$ (blue), $\hat\sigma^2_{\hat\lambda_2}$ (green) and  $\hat\sigma^2_{\lambda_\text{orc}}$ (dotted) when observing a single realization of $\sigma(t) W(t)$, where $\sigma(t)=2\sin(t/0.015)+2.8$ is unknown and $W$ is a stationary Gaussian random field with known Mat\'ern parameters $(\sigma, \nu,\rho)=(1,0.5,0.5)$, at 1000 even sampling locations in $[0,0.1]$. {\it Right:}  Plots of the standardized criterion profiles which, when maximized, give $\lambda_\text{orc}$ (dotted), $\hat\lambda_{1}$ (blue) and $\hat\lambda_{2}$ (green). }
\end{figure}

\begin{figure}[h]
\centering
{\includegraphics[height=2.4in]{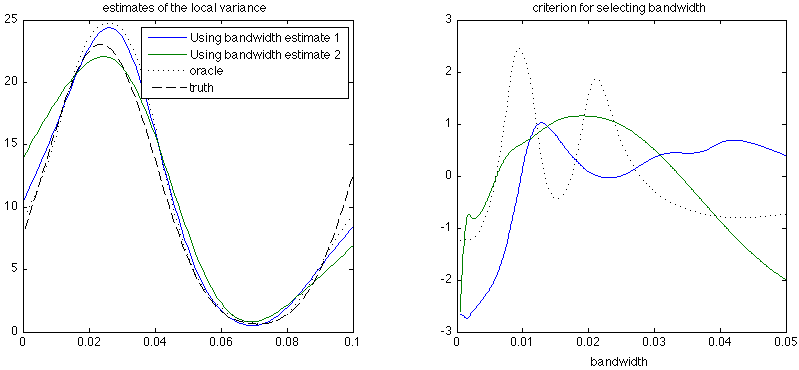}}
\caption{\label{SNRGood22}
Estimates $\hat\sigma^2_{\hat\lambda_1}$, $\hat\sigma^2_{\hat\lambda_2}$ and  $\hat\sigma^2_{\lambda_\text{orc}}$  of $\sigma^2$ (left) along with the criterion profiles for estimating $\lambda$ (right). The simulation parameters are the same as in Figure \ref{SNRGood2} with the exception that $\nu=1$ instead of $\nu=0.5$.
}
\end{figure}

To illustrate $\hat\lambda_\text{1}$ and $\hat\lambda_2$ we provide some simulations in the case of variance modulation (see Section \ref{VM}). Our first simulation is a single realization of $\sigma(t)W(t)$ at $1000$ evenly spaced sampling locations in the interval $[0,0.1]$  where $\sigma(t)=2\sin(t/0.015)+2.8$ is unknown and $W$ is a mean zero stationary Gaussian random field with known Mat\'ern autocovariance parameters $(\sigma,\nu,\rho)=(1,0.5,0.5)$. The left plot of Figure \ref{SNRGood2} shows the true local parameter function $\sigma^2(t)$ (dashed line) along with three different estimates $\hat\sigma^2_{\hat\lambda_1}$ (blue), $\hat\sigma^2_{\hat\lambda_2}$ (green) and  $\hat\sigma^2_{\lambda_\text{orc}}$ (dotted line). The estimate $\hat\sigma^2_{\lambda_\text{orc}}$ is the oracle estimate,   which estimates the smoothness parameter by 
\begin{equation}\lambda_\text{orc}\triangleq \text{arg}\max_{\lambda} \bigl\{D( G_{\hat\sigma_\lambda^2} \| G_{\sigma^2} )^{-1}\bigr\},\end{equation}
where $G_{\sigma^2}$ denotes the true global model for $\sigma(t)W(t)$  at the observation locations and $D(\cdot \| \cdot)$ is Kullback-Leibler divergence.
The right plot of Figure \ref{SNRGood2} shows the criterion profiles that are maximized for estimating the bandwidths $\hat \lambda_1$ (blue), $\hat\lambda_2$ (green) and $\lambda_\text{orc}$ (dotted). The profiles in this diagram  are standardized by their average and standard deviation so they can be compared on the same scale.
The parameters of our second simulation are exactly the same with the exception that now $\nu=1$ instead of $\nu=0.5$.  The results of this simulation are shown in Figure \ref{SNRGood22}, where again, the left plot shows $\sigma^2$ (dashed), $\hat\sigma^2_{\hat\lambda_1}$ (blue), $\hat\sigma^2_{\hat\lambda_2}$ (green) and  $\hat\sigma^2_{\lambda_\text{orc}}$ (dotted). The right plot, again shows the standardized profiles for estimating $\lambda$. 

For both simulations, $\hat\lambda_1$ and $\hat\lambda_2$ do a good job of estimating an appropriate bandwidth. The modes of the data-driven criterion profiles seem to come reasonably close to at least one of the modes in the oracle profiles. It is unclear to us, at this moment, why the oracle profile in both simulations have two modes. Regardless, we think it is a good sign that the data-driven profile modes are close to at least one oracle mode.  Moreover the resulting estimates $\hat\sigma^2_{\hat\lambda_1}$ and $\hat\sigma^2_{\hat\lambda_2}$ give good agreement with  the truth $\sigma^2$, although some under-fitting can be seen in Figure \ref{SNRGood2}.  Notice that  both of these simulations  use a large range parameter $\rho=0.5$ as compared to the size of the observation region $[0,0.1]$. This was intentional and designed to mimic the infill asymptotic regime of letting $\lambda\rightarrow 0$ as one gets denser observations.

 %%%%%%%%%%%%%%%%%%%%%%%%%%%%%
\section{Estimating a local fractional index in 2 dimensions.}
\label{2dExample}
In this section we take full advantage of our local likelihood technique and estimate a spatially varying smoothness parameter when observing one realization of the field at random observation locations in $[0,1]^2$.  In our simulation we suppose that there is {\it a priori} information that the observed random field is locally isotropic with known local variance and range that do not vary spatially but with an unknown spatially varying smoothness parameter, denoted $\nu_t$ ($t\in\Bbb R^2$). There are many nonstationary random fields that have spatially varying smoothness parameter (see \cite{peltier:95}, \cite{stoev:06}, \cite{1259022} and \cite{sly:07} for example). For our simulation we use a nonstationary local Mat\'ern, presented below that  has a closed form covariance, can attain any degree of differentiability or H\"older smoothness and behaves locally like a stationary Mat\'ern. We include an appendix that proves the positive definiteness of our covariance function (which originally appeared in the technical report \cite{stein:nonstatTech}).

We start by deriving our nonstationary covariance structure with spatially varying local parameters.
Let  $\mathcal M_\nu(\cdot)\triangleq  |\cdot|^\nu \mathcal K_{\nu}( \cdot)$  where $\mathcal K_\nu$ is the modified Bessel function of the second kind and let $\nu_t$, $\sigma_t$ and $\alpha_t$ denote spatially varying Mat\'ern parameters. The local parameter function $\nu_t\colon \Bbb R^d\rightarrow \Bbb R^+$ determines the local smoothness, $\sigma^2_t\colon \Bbb R^d\rightarrow \Bbb R^+$ determines local variance and $\alpha_t\colon \Bbb R^d\rightarrow PD_d(\Bbb R)$ determines a local geometric anisotropy which maps $\Bbb R^d$ into the set of $d\times d$ positive definite matrices with real entries. Now define the following covariance function
\begin{equation}
\label{nonstatt}
 R(s,t)= \sigma_t\sigma_s \det\bigl({\alpha_{st}^{-1/2}}\bigr) \mathcal M_{\nu_{st}}\left(\bigl|\alpha_{st}^{-1/2} (t-s)\bigr|\right) 
 \end{equation}
where $\alpha_{st}\triangleq(\alpha_t+\alpha_s)/2$, $\nu_{st}\triangleq (\nu_t+\nu_s)/2$. 
In the appendix we show that the  nonstationary covariance function $R$ is positive definite.

One problematic feature about $R$ is that it is difficult to separate the interpretation of the local scale $ \det(\alpha_t^{-1/2})$ and the smoothness parameter $\nu_t$.  This is because both $\det(\alpha_t^{-1/2})$ and $\nu_t$ effect the local lag for which the correlation becomes close to zero.  Therefore it is desirable to re-parameterize (\ref{nonstatt}) to give distinct interpretations of the three local parameter functions: variance, range and smoothness.  To simplify the exposition we suppose there is no geometric anisotropy so that $\alpha_t$ maps $t\in \Bbb R^d$ into the set of positively scaled $d\times d$ identity matrices. In particular let $\rho_t\colon \Bbb R^d\rightarrow \Bbb R^+$ and define
\[ 
K(t,s)\triangleq
\sigma_s\sigma_t \left[\frac{(\rho_s^2/4\nu_s)^{d/2}} {\Gamma(\nu_s) 2^{\nu_s-1}} \right]^{1/2} \left[\frac{(\rho_t^2/4\nu_t)^{d/2}} {\Gamma(\nu_t) 2^{\nu_t-1}} \right]^{1/2} \left[ \frac{\rho_s^2}{8\nu_s} + \frac{\rho_t^2}{8\nu_t} \right]^{-d/2} \mathcal M_{\nu_{st}} 
\left[ \left(\frac{\rho_s^2}{8\nu_s} + \frac{\rho_t^2}{8\nu_t}\right)^{-1/2} \bigl|s-t\bigr| \right].
\]
The advantage of this covariance function (which is a re-parameterization and simplification of (\ref{nonstatt}))  is that when both $s,t$ are near some fixed $t_0$,  and the local parameter functions $\nu_t$, $\rho_t$ and $\sigma^2_t$ are sufficiently smooth, we have that
\[ K(s,t)\approx \frac{\sigma^2_{t_0}}{\Gamma(\nu_{t_0})2^{\nu_{t_0}-1}} \mathcal M_{\nu_{t_0}}(2\sqrt{\nu_{t_0}} |s-t|/\rho_{t_0} ).\]
This is recognized as an isotropic Mat\'ern autocovariance with parameterization found on page 50 of \cite{stein:book}.
Therefore $\sigma_t^2$ is the local variance, $\nu_t$ is the local smoothness, and $\rho_t$ has the interpretation of the local range. Now our {\it a priori} information on the observed random field  amounts to the supposition that $\rho_t\equiv \rho$ and $\sigma_t\equiv \sigma$ are constant functions of $t\in\Bbb R^d$ so that $K$ simplifies to 
\begin{equation}
\label{SSim}
K(s,t)=\sigma^2 \frac{\nu_s^{d/4} \nu_t^{d/4}}{\nu_{st}^{d/2}(\Gamma(\nu_s ) 2^{\nu_s-1} )^{1/2}(\Gamma(\nu_t ) 2^{\nu_t-1} )^{1/2}} \mathcal M_{\nu_{st}}\left[ \frac{2\sqrt{\nu_s\nu_t}}{\rho\sqrt{\nu_{st}}}|s-t| \right]. 
\end{equation}

When estimating a local parameter function in two dimensions,  bias becomes prominent near the boundary of the observation region. In our simulation we attempted to design the local likelihood weights to automatically mitigate this bias near the boundary.  We can loosely motivate our weights by the theory of estimating equations. Notice that if the local parameter function $\theta$ is univariate (so that it maps $\Bbb R^d$ into $\Bbb R$)  then $\frac{d}{d\theta_0}\mathcal W_\lambda (\theta_0,t|\text{data})=\sum_{k=1}^n w_k S^\prime_k (\theta_0)$ where $S_k(\theta_0)=\log f_{\theta_0}(z_k|\mathcal N_{k-1,t})$ is the conditional log likelihood of the $k^\text{th}$ nearest observation to $t$ conditional on  $k-1$ nearer observations. Therefore the local likelihood estimate $\hat\theta(t)$ is defined as solving 
\[ \sum_{k=1}^n w_k S_k^\prime (\hat\theta(t))=0 \]
where both $S_k$ and $w_k$ depend on $t$.
This suggests to design the weights $w_k$ to minimize  $\text{var}_\theta\bigl[\sum_{k=1}^n w_k S_k^\prime (\theta_0) ]$ subject to the unbiasedness constraint  $\sum_{k=1}^n w_k E_\theta(S_k^\prime (\theta_0))=0$ where $\theta(t)=\theta_0$. Notice that $E_\theta$ and $\text{var}_\theta$ denote expected value and variance with respect to the true nonstationary model $G_{\theta(\cdot)}$. To exclude the solution $w_1=\cdots =w_n=0$ we fix the scale $\sum_{k=1}^n w_k=1$. 
Now if we ignore the covariance of the cross terms $\text{cov}(w_k S_k^\prime (\theta_0),w_j S_j^\prime( \theta_0))$ (which are zero under the stationary model since $S_1^\prime (\theta_0), \ldots, S_n^\prime (\theta_0)$ then forms a martingale difference sequence) one wishes to design the weights $w_k$  to minimize $ \sum  w^2_k \, \text{var}_\theta [S^\prime_k (\theta_0)] $ subject to $\sum_{k=1}^n w_k E_\theta(S_k^\prime (\theta_0))=0$. 

First notice that the quantity $1/\text{var}_\theta [S^\prime_k (\theta_0)] $ can be interpreted as the  information  that the data $z_k$ provides for $\theta_0$, conditional on the $k-1$ nearer observations to $t$. In the fully nonstationary case we stipulate that $1/\text{var}_\theta [S^\prime_k (\theta_0)] $ will decay to zero as $t_k\rightarrow \infty$  and depend predominantly on the nuisance parameters that govern the deviation of $\theta(t_k)$ from $\theta(t)$ (e.g. on the coefficients of order $\geq 1$ of the Taylor expansion of $\theta(t_k)$ at $t$).
For a particular estimation problem it may be possible  to use prior knowledge on the scale of spatial variability of the local parameter function $\theta$ to attempt to get some understanding of how $1/\text{var}_\theta [S^\prime_k (\theta_0)] $ behaves as a function of $t_k$. In our example we model $1/\text{var}_\theta [S^\prime_k (\theta_0)] $ as $\exp [-|t-t_k|^2/2\lambda^2  ]$ for some unknown univariate  nuisance parameter $\lambda$, which is essentially plays the roll of a bandwidth parameter.
To approximate the unbiasedness requirement $\sum_{k=1}^n w_k E_\theta(S_k^\prime (\theta_0))=0$ we notice that  $E_\theta(S_k^\prime (\theta_0))$ is a function of $t,t_1,\ldots,t_k$ and converges to zero as $t_k\rightarrow t$. Moreover we believe its behavior will depend mostly on $t_k$ and $t$ so that the first order Taylor expansion in $t_k$ at $t$ is $E_\theta(S_k^\prime (\theta_0))\approx c_{t,\theta} (t-t_k)$ for an unknown constant $c_{t,\theta}$. 
This is our motivation for the following variational characterization of the local likelihood weights:
\begin{equation}
\label{WWeights} 
\text{minimize} \sum_{k=1}^n  w^2_k \exp [|t-t_k|^2/2\lambda^2  ] \text{  subject to  } \begin{cases}
\sum_{k=1}^n w_k=1, \\
\sum_{k=1}^n w_k (t-t_k)=(0,0)^T.
\end{cases}
\end{equation}
Notice that this particular variational characterization is essentially the same as in the nonparameteric regression setting (see \cite{muller:1987}). 

{\it Remark:} It is interesting to note that the solution of (\ref{WWeights}) is in the form
\[ w_k =[a + b\cdot (t-t_k)] \exp[-|t-t_k|/2\lambda^2]  \]
where $a$, $b$ are essentially the lagrange multipliers of the variational problem. Therefore the solution of $w_k$ in (\ref{WWeights}) can be computed quickly by inversion of a $2\times 2$ matrix for each $\lambda$.

\begin{figure}[h]
\centering
{\includegraphics[height=2in]{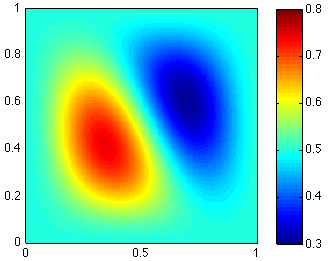}} 
{\includegraphics[height=2in]{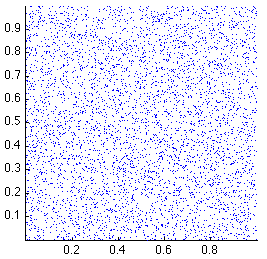}} \\
{\includegraphics[height=2in]{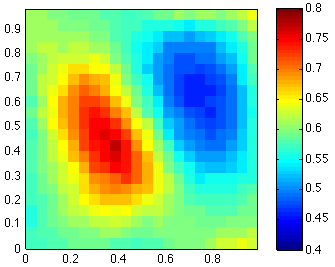}} 
%{\includegraphics[height=2.05in]{LocMatern2dPic4.png}} 
{\includegraphics[height=2in]{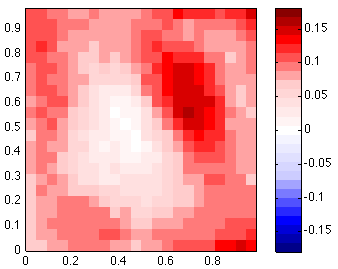}} 
%{\includegraphics[height=1.8in]{}} 
\caption{\label{samplFig1}{\it Top left:}  The true local parameter function $\nu_t$ used in the simulation. {\it Top right:} The 5000 observation locations. {\it Bottom left:} The estimate of the local parameter function $\nu_t$ on the grid of $500$ estimation locations.  {\it Bottom right:} The error $\hat\nu_t - \nu_t$.}
\end{figure}

To test our estimate of $\nu_t$ we simulated one realization of a mean zero Gaussian random field  at 5000 random observation locations in $[0,1]^2$ using autocovariance function (\ref{SSim}) where $\rho=0.5$ and $\sigma=1$ are assumed to be known. We used the algorithm for Cholesky updating found in \cite{cholaug:02} to compute the log-likelihoods $\ell(\mathcal N_{t,k}|\nu)$ for $k=1,\ldots,500$ ($500$ was chosen so the computations could be done in reasonable time). A plot of the local parameter function $\nu_t$ and the sampling locations are shown on the top row of diagrams in Figure \ref{samplFig1}. 
To adjust for the fact that there is $\sim$75\% less data at the corners, and $\sim$50\% near the edges, we scaled the bandwidth by $2$ near the corner and by $\sqrt{2}$ near the edges  (by linear interpolation). The bottom left plot of Figure \ref{samplFig1} shows the estimate $\hat\nu_t$ on a square grid of $500$ estimation locations in $[0,1]^2$ using the bandwidth estimator $\hat\lambda_2$. The right plot shows the bias: $\hat\nu_t-\nu_t$. One can see that the estimate $\hat\nu_t$ does a good job of recovering the main features of the true local parameter function $\nu_t$. However, there is a positive bias all throughout the observation region with more bias in the region with  small $\nu_t$.

\appendix
\section*{Appendix}
\label{aapend1}

In this appendix we show that the nonstationary covariance function (\ref{nonstatt}) is positive definite. Originally presented in \cite{stein:nonstatTech}, the proof combines the results found in \cite{paciorek:04b}, \cite{paciorek:04a} for generating a spatially varying geometric anisotropy with those found in \cite{pintore:04} for generating a local isotropic Mat\'ern with spatially varying smoothness parameter.
 We start by stating the following lemma which  is proved using a convolution argument found in Paciorek's thesis \cite{paciorek:thesis} on, p. 27 .
\begin{lemma} Let $\alpha_t$ map $t\in\Bbb R^d$ to the set of real $d\times d$ positive definite matrices (denoted $PD_d(\Bbb R)$) and let $\phi(u)\triangleq \exp(-|u|^2/2)$ Then
\begin{equation}
\label{sshow}
\det(\alpha_{st}^{-1/2}) \phi\bigl[\alpha_{st}^{-1/2}(s-t)/\sigma\bigr] = c_t c_s \int_{\Bbb R^d}  \phi \bigl[\alpha_s^{-1/2}(u-s)/\sigma\bigr] \phi\bigl[\alpha_t^{-1/2}(u-t)/\sigma\bigr] du 
\end{equation}
where $\alpha_{st}\triangleq (\alpha_s + \alpha_t)/2$ and $c_s\triangleq (2\pi)^{-d/4}\sigma^{-d/2} \det(\alpha_t^{-1/2})$.
\end{lemma}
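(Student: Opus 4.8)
The plan is to establish the identity by the standard Gaussian ``completing the square'' computation, treating both sides as (unnormalized) Gaussian densities; this is exactly the convolution argument alluded to after the statement. First I would make the right-hand integrand explicit. Since $\phi(u)=\exp(-|u|^2/2)$, we have $\phi\bigl[\alpha_s^{-1/2}(u-s)/\sigma\bigr]=\exp\bigl(-\tfrac{1}{2\sigma^2}(u-s)^T\alpha_s^{-1}(u-s)\bigr)$, and likewise for the $t$ factor. Writing $P\triangleq\sigma^{-2}\alpha_s^{-1}$ and $R\triangleq\sigma^{-2}\alpha_t^{-1}$, the product of the two kernels equals $\exp\bigl(-\tfrac12 Q(u)\bigr)$ with $Q(u)=(u-s)^TP(u-s)+(u-t)^TR(u-t)$.

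Next I would complete the square in $u$. Expanding and collecting the quadratic and linear terms in $u$ gives $Q(u)=(u-u^\ast)^T(P+R)(u-u^\ast)+C$, where $u^\ast=(P+R)^{-1}(Ps+Rt)$ and $C$ does not depend on $u$. Integrating the resulting Gaussian over $u\in\Bbb R^d$ yields $(2\pi)^{d/2}\det(P+R)^{-1/2}\exp(-\tfrac12 C)$. The key algebraic step is to simplify the residual using $C=(s-t)^T P(P+R)^{-1}R\,(s-t)$ together with the identity $P(P+R)^{-1}R=(P^{-1}+R^{-1})^{-1}$. Substituting $P^{-1}=\sigma^2\alpha_s$ and $R^{-1}=\sigma^2\alpha_t$ gives $(P^{-1}+R^{-1})^{-1}=\sigma^{-2}(\alpha_s+\alpha_t)^{-1}$, which is a scalar multiple of $\alpha_{st}^{-1}$; hence $C$ reproduces precisely the quadratic form appearing inside $\phi\bigl[\alpha_{st}^{-1/2}(s-t)/\sigma\bigr]$ on the left.

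For the prefactor I would use the factorization $\alpha_s^{-1}+\alpha_t^{-1}=\alpha_s^{-1}(\alpha_s+\alpha_t)\alpha_t^{-1}$, which gives $\det(P+R)=\sigma^{-2d}\det(\alpha_s^{-1})\det(\alpha_s+\alpha_t)\det(\alpha_t^{-1})$ and therefore expresses $\det(P+R)^{-1/2}$ in terms of $\det(\alpha_{st}^{-1/2})$, $\det(\alpha_s^{1/2})$, and $\det(\alpha_t^{1/2})$. Collecting the leftover powers of $2\pi$, $\sigma$, and $2$, together with $\det(\alpha_s^{1/2})$ and $\det(\alpha_t^{1/2})$, into the location-independent constants then identifies them with $c_sc_t$ and leaves exactly $\det(\alpha_{st}^{-1/2})\phi\bigl[\alpha_{st}^{-1/2}(s-t)/\sigma\bigr]$. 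I expect the only real obstacle to be this constant bookkeeping: one must track the normalizing factors carefully, in particular the powers of $2$ coming from $\alpha_{st}=(\alpha_s+\alpha_t)/2$ and the $\tfrac12$ in the exponent of $\phi$, to confirm that they collapse into exactly $c_s$ and $c_t$ as defined, and I would double-check the stated constants against this computation.
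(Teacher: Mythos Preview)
Your approach is exactly the ``convolution argument'' the paper invokes (the paper itself does not prove the lemma but defers to Paciorek's thesis): write each kernel as an unnormalized Gaussian density, complete the square in $u$, and read off the exponent via $P(P+R)^{-1}R=(P^{-1}+R^{-1})^{-1}$ together with the determinant factorization $\alpha_s^{-1}+\alpha_t^{-1}=\alpha_s^{-1}(\alpha_s+\alpha_t)\alpha_t^{-1}$. Your caution about the constant bookkeeping is well placed---the displayed definition of $c_s$ (with $\det(\alpha_t^{-1/2})$ rather than $\det(\alpha_s^{-1/2})$) is evidently a typo, and carrying the calculation through reveals stray powers of $2$ as well, so when you do the check you should trust your computation over the printed constants; none of this affects the positive-definiteness conclusion downstream.
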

%\begin{proof} 
% Let $X\sim \mathcal N(s-t,\sigma^2\alpha_s)$ and $Y\sim\mathcal N(t,\sigma^2\alpha_t)$ be independent random variables. Then $X+Y\sim \mathcal N(s,\sigma^2(\alpha_s+\alpha_t))$. Then the density of $X+Y$ evaluated at $t$ \, denoted by $P(X+Y=t)$, satisfies
%\begin{align}
%\det(\alpha_{st}^{-1/2}) \phi\bigl[\alpha_{st}^{-1/2}(s-t)/\sigma\bigr]  &= (2\pi)^{d/2} \sigma^{d}P(X+Y=t)\\
%&= (2\pi)^{d/2} \sigma^{d}\int_{\Bbb R^d} P(Y=t-w)P(X=w) dw  \\
%&= c_t c_s  \int_{\Bbb R^d}  \phi\bigl[\alpha_t^{-1/2}w/\sigma\bigr]\phi\bigl[\alpha^{-1/2}_s(w+t-s)/\sigma\bigr] dw \\
%&\overset{u=w-t}= c_t c_s\int_{\Bbb R^d} \phi\bigl[\alpha_t^{-1/2}(u-t)/\sigma\bigr]\phi\bigl[\alpha^{-1/2}_s(u-s)/\sigma\bigr] du.
%\end{align}
%This completes the proof.
%\end{proof}
\noindent
The importance of this lemma is that  the right hand side of (\ref{sshow}) is positive definite with locally varying geometric anisotropy $\alpha_t^{-1/2}$. 
This follows since the right hand side of equation (\ref{sshow}) equals $\text{cov} (Z_\sigma(s), Z_\sigma(t))$ where $Z_\sigma(t)=c_t\int_{\Bbb R^d}   \phi\bigl[\alpha_t^{-1/2}(u-t)/\sigma\bigr] dW(u)$ and $dW$ is Gaussian white noise. 
Now let $K_\sigma(s,t)\triangleq\det(\alpha_{st}^{-1/2}) \phi\bigl[\alpha_{st}^{-1/2}(s-t)/\sigma\bigr]$ and 
notice that for any function $g_\sigma(\cdot)$ the function $g_\sigma(s)g_\sigma(t) K_\sigma(s,t)$ is positive definite. Since convex combinations and limits of positive definite functions are again positive definite we have that
\begin{equation}
\label{conveq}
 \int_0^\infty g_\sigma(s)g_\sigma(t) K_\sigma(s,t) d\mu(\sigma) 
 \end{equation}
is positive definite for any positive finite measure $\mu$ such that $\int_{\Bbb R^d }g^2_\sigma(t) K_\sigma(t,t)d\mu(\sigma)<\infty $ for all $t$.

\begin{claim}  Let $\sigma_t \colon \Bbb R^d\rightarrow \Bbb R^+$, $\nu_t  \colon \Bbb R^d\rightarrow \Bbb R^+$ and $\alpha_t \colon \Bbb R^d\rightarrow PD_d(\Bbb R)$ and define   $\alpha_{st}\triangleq (\alpha_t+\alpha_s)/2$, $\nu_{st}\triangleq (\nu_t+\nu_s)/2$. Then 
\[ R(s,t)= \sigma_t\sigma_s \det\bigl({\alpha_{st}^{-1/2}}\bigr) \mathcal M_{\nu_{st}}\left(\bigl|\alpha_{st}^{-1/2} (s-t)\bigr|\right) \]
is positive definite on $\Bbb R^d$ where $\mathcal M_\nu(x)=x^\nu \mathcal K_\nu(x)$ and $\mathcal K_\nu$ is the modified Bessel function of the second kind of order $\nu>0$  (see \cite{AShandbook}). 
\end{claim}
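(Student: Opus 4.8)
The plan is to realize $R$ as a continuous mixture of the positive definite Gaussian kernels $K_\sigma$ supplied by the Lemma, precisely in the form of the integral (\ref{conveq}). The Lemma already disposes of the spatially varying anisotropy, since $K_\sigma(s,t)=\det(\alpha_{st}^{-1/2})\phi[\alpha_{st}^{-1/2}(s-t)/\sigma]$ is positive definite for every fixed $\sigma>0$; what remains is to reproduce the Mat\'ern factor $\mathcal M_{\nu_{st}}$ together with the averaged smoothness $\nu_{st}=(\nu_s+\nu_t)/2$ by choosing the mixing weight $g_\sigma$ correctly.

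First I would record the Gaussian scale-mixture representation of the stationary Mat\'ern function. Starting from the integral representation of the modified Bessel function (\cite{AShandbook}) and using $\mathcal K_\nu=\mathcal K_{-\nu}$, one obtains $\mathcal M_\nu(x)=x^\nu\mathcal K_\nu(x)=2^{\nu-1}\int_0^\infty t^{\nu-1}e^{-t}e^{-x^2/(4t)}\,dt$, and after the substitution $t=\sigma^2/2$ this becomes
\[ \mathcal M_\nu(x)=\int_0^\infty \sigma^{2\nu-1}e^{-\sigma^2/2}\,\phi(x/\sigma)\,d\sigma, \]
a mixture of the Gaussians $\phi(\cdot/\sigma)=e^{-|\cdot|^2/2\sigma^2}$ against the Gamma-type weight $\sigma^{2\nu-1}e^{-\sigma^2/2}$ (this is the representation behind the construction in \cite{pintore:04}).

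The crux is a square-root factorization of this weight. Setting $g_\sigma(t)\triangleq \sigma_t\,\sigma^{\nu_t-1/2}e^{-\sigma^2/4}$, the additivity of the exponents gives $g_\sigma(s)g_\sigma(t)=\sigma_s\sigma_t\,\sigma^{\nu_s+\nu_t-1}e^{-\sigma^2/2}=\sigma_s\sigma_t\,\sigma^{2\nu_{st}-1}e^{-\sigma^2/2}$, which is exactly the Mat\'ern mixing weight at the averaged smoothness $\nu_{st}$. Writing $x=|\alpha_{st}^{-1/2}(s-t)|$ and pulling the $\sigma$-independent factor $\det(\alpha_{st}^{-1/2})$ out of the integral, I would then verify
\[ \int_0^\infty g_\sigma(s)g_\sigma(t)\,K_\sigma(s,t)\,d\sigma=\sigma_s\sigma_t\det(\alpha_{st}^{-1/2})\int_0^\infty \sigma^{2\nu_{st}-1}e^{-\sigma^2/2}\phi(x/\sigma)\,d\sigma=R(s,t), \]
the last equality being the mixture representation above with $\nu=\nu_{st}$. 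This is precisely the form (\ref{conveq}) with $d\mu$ Lebesgue measure, so positive definiteness of $R$ follows from the positive definiteness of each $K_\sigma$ (the Lemma) and the stability of positive definiteness under convex combinations and limits. The only side condition is the finiteness requirement in (\ref{conveq}), here $\int_0^\infty g_\sigma(t)^2 K_\sigma(t,t)\,d\sigma=\det(\alpha_t^{-1/2})\sigma_t^2\int_0^\infty \sigma^{2\nu_t-1}e^{-\sigma^2/2}\,d\sigma<\infty$, which holds since $\nu_t>0$ controls integrability at the origin and the Gaussian factor controls it at infinity.

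I expect the main obstacle to be identifying the correct weight $g_\sigma$: one must simultaneously match the Gamma-type mixing density of the Mat\'ern, arrange the exponents so that $\sqrt{\sigma^{2\nu_s-1}}\,\sqrt{\sigma^{2\nu_t-1}}=\sigma^{2\nu_{st}-1}$ lands on the averaged smoothness, and split the Gaussian tail symmetrically as $e^{-\sigma^2/2}=e^{-\sigma^2/4}e^{-\sigma^2/4}$ so that it distributes evenly between the two sites. Once this weight is seen, the anisotropy is already handled by the Lemma and the remaining work is the routine bookkeeping displayed above; the multiplicative $\sigma_t$ and $\det(\alpha_{st}^{-1/2})$ factors are immaterial to positive definiteness.
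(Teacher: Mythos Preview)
Your proof is correct and essentially identical to the paper's: both realize $R$ as the mixture (\ref{conveq}) by exploiting the Gaussian scale-mixture representation of $\mathcal M_\nu$ together with the additivity $\nu_s/2+\nu_t/2=\nu_{st}$ in the exponent. The only difference is bookkeeping---the paper places the tail in the measure, taking $g_\sigma(t)=(\sigma^2/2)^{\nu_t/2-1}$ and $d\mu(\sigma)=(\sigma^2/2)e^{-\sigma^2/2}\,d\sigma$, whereas you split $e^{-\sigma^2/2}=e^{-\sigma^2/4}e^{-\sigma^2/4}$ into $g_\sigma$ and use Lebesgue measure; your version also absorbs the harmless factor $\sigma_t$ into $g_\sigma$ and handles the change of variables a bit more cleanly.
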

\begin{proof} Let $g_\sigma(t)=\bigl(\frac{\sigma^2}{2} \bigr)^{\nu_t/2-1}$ and $\mu$ have density $\frac{\sigma^2}{2}e^{-\sigma^2/2}$ with respect to Lebesque measure. Then equation (\ref{conveq}) becomes
\begin{align*} 
\int_0^\infty g_\sigma(s)g_\sigma(t) K_\sigma(s,t) d\mu(\sigma) &=  \det(\alpha_{st}^{-1/2}) \int_0^\infty  \left(\frac{\sigma^2}{2} \right)^{\nu_{st}-1}  \exp\left[-\frac{|\alpha_{st}^{-1/2}(s-t)|^2}{4(\sigma^2/2)}-\frac{\sigma^2}{2}\right] d\sigma \\
&\overset{x=\sigma^2/2}=\det(\alpha_{st}^{-1/2}) \int_0^\infty  x^{\nu_{st}-1}  \exp\left[-\frac{|\alpha_{st}^{-1/2}(s-t)|^2}{4x}-x\right] dx \\
&=2\det(\alpha_{st}^{-1/2}) 2^{-\nu_{s}/2} 2^{-\nu_{t}/2}|\alpha_{st}^{-1/2}(s-t)|^{\nu_{st}} \mathcal K_{\nu_{st}} (|\alpha_{st}^{-1/2}(s-t)|)
\end{align*}
where the last line is from (3.472.9) of Gradshteyn and Ryzhik \cite{Grad:book}.
Therefore $\det(\alpha_{st}^{-1/2}) \mathcal M_{\nu_{st}}\bigl(|\alpha_{st}^{-1/2}(s-t)|\bigr)$ is positive definite.
\end{proof}
 \bibliography{references}
 \end{document}